% ----------------------------------------------------------------
% Article Class (This is a LaTeX2e document)  ********************
%  may
\documentclass[11pt]{article}
\topmargin=-0.5cm \textwidth 16cm \textheight 23cm
\oddsidemargin=-0.5cm
\usepackage[english]{babel}
\usepackage{amsmath,amsthm}
\usepackage{amsfonts}
\usepackage[comma,numbers,square,sort&compress]{natbib}
\usepackage{graphicx,subfigure,amsmath,amssymb,mathrsfs,amsfonts,amstext,amsthm}
\usepackage{latexsym, amssymb}
\usepackage{graphicx}
\usepackage{subfigure,wrapfig,caption,lipsum}
\usepackage{pgf,tikz}
\usepackage{pstricks}
% THEOREMS -------------------------------------------------------
\newtheorem{thm}{Theorem}[section]

\newtheorem{lem}[thm]{Lemma}

\theoremstyle{definition}
\newtheorem{defn}[thm]{Definition}
\theoremstyle{remark}
\newtheorem{rem}[thm]{Remark}
\numberwithin{equation}{section}
% ----------------------------------------------------------------
\begin{document}

\title{\bfseries\textrm{Noise-driven Synchronization of Vicsek Model in Mean}
\footnotetext{*Wei Su and Yongguang Yu are with School of Mathematics and Statistics, Beijing Jiaotong University, Beijing 100044, China, {\tt su.wei@bjtu.edu.cn, ygyu@bjtu.edu.cn}. Ge Chen is with National Center for Mathematics and Interdisciplinary Sciences \& Key Laboratory of Systems and
Control, Academy of Mathematics and Systems Science, Chinese Academy of Sciences, Beijing 100190,
China, {\tt chenge@amss.ac.cn} }
}

\author{Wei Su, Yongguang Yu, Ge Chen}

\date{}
\maketitle
% ----------------------------------------------------------------
% ----------------------------------------------------------------
% Article Class (This is a LaTeX2e document)  ********************
% ----------------------------------------------------------------
% ----------------------------------------------------------------
\begin{abstract}
The Vicsek model has long stood as a pivotal framework in exploring collective behavior and self-organization, captivating the scientific community with its compelling dynamics. However, understanding how noise influences synchronization within this model and its associated phase transition characteristics has presented significant challenges. While numerous studies have focused on simulations due to the model's mathematical complexity, comprehensive theoretical analyses remain sparse. In this paper, we deliver a rigorous mathematical proof demonstrating that for any initial configuration of the Vicsek model, there exists a bound on noise amplitude such that if the noise amplitude is maintained within this bound, the system will achieve synchronization in mean. This finding not only lays a solid mathematical groundwork for the Vicsek model's phase transition theory but also underscores the critical role of noise in collective dynamics, enhancing our understanding of self-organizing systems in stochastic environments.
\end{abstract}

\textbf{Keywords}: Phase transition, noise, synchronization, Vicsek model
% ----------------------------------------------------------------
\section{Introduction}\label{Section:introduction}
Collective behavior in biological systems has garnered significant attention in recent years, as it encompasses a variety of phenomena observed in nature, including flocking of birds, schooling of fish, and swarming of insects \cite{Haken1983,Toner1995,Couzin2005,Bialek2012,Lowen2023,Yan2024}. To gain deeper insights into these complex systems, the Vicsek model has gradually emerged as a prominent framework. Initially proposed by Vicsek et al. in 1995, this model provides a simplified yet powerful mathematical description of self-organizing systems  \cite{Vicsek1995}. In the Vicsek model, particles (agents) move in a two-dimensional space, aligning their velocities based on the average direction of their neighboring agents. Despite its simple design, the Vicsek model exhibits a rich, non-trivial dynamics that reflects the behaviors seen in real collective systems.

One of the most captivating aspects of the Vicsek model is the phase transition it undergoes as the key parameter of noise intensity crosses a threshold. At low noise levels, the system tends to exhibit ordered behavior, with all particles moving in a coordinated manner; conversely, at high noise levels, the directions of particles become randomized, resulting in a disordered state. Understanding the precise mechanisms behind this phase transition remains a crucial challenge in the fields of statistical physics and complex systems \cite{Vicsek1995,Sumpter2006}.
%
%Noise is an inherent characteristic of many natural systems, and its role can be both beneficial and detrimental in various contexts. In the realm of collective motion, noise can be understood as a double-edged sword: on one hand, it can disrupt the alignment of individuals, leading to disorganized movement; on the other, it can serve as a catalyst for synchronization by facilitating exploration of the state space \cite{Couzin2005}. The addition of noise modifies the speed of particles, introducing randomness into their movement, which complicates the dynamics of the system. Therefore, understanding how these noise-driven dynamics influence synchronization is significant not only theoretically but also for practical applications in fields such as robotics, traffic management, and biological systems \cite{Bialek2012,Vicsek2012}

Since the foundational work laid by Vicsek, subsequent research has often focused on empirical or numerical studies of the model \cite{Vicsek2012}. However, the theoretical analysis of the Vicsek model poses significant challenges, which are seldom addressed comprehensively in the literature. Traditionally, mathematical studies focusing on synchronization within the Vicsek model have tended to overlook the impact of noise \cite{Jadba2003, Tang2007, Liu2009, Chen2014,Zheng2017}. A pivotal advancement in the analysis of the original model was presented by Chen in \cite{Chen2017}, where it was shown that even minor noise can disrupt the collective behavior of the Vicsek model. This theoretical insight is striking, as it seemingly contradicts empirical observations suggesting that noise of small amplitude can actually enhance synchronization within the model. This paradox underscores the need for a more thorough and deeper analysis.

In this paper, we delve deeper into the characteristics of the model. We find that the destructive effects of noise identified in \cite{Chen2017} are only of a probability sense, i.e., \emph{almost surely}. By interpreting another probability sense-namely synchronization \emph{in mean}-we rigorously establish, for any initial configuration, there exists a threshold of noise amplitude that guarantees synchronization in the Vicsek model. To be specific, we demonstrate that regardless of the number of agents, their initial positions, movement directions, speeds, or neighborhood radius, there exists a bound on noise amplitude such that if the noise amplitude is kept within this bound, the system will achieve synchronization in mean. Since the Vicsek model cannot achieve synchronization without noise for some initial states, these results indicate that random noise is crucial for the emergence of order from disorder in the Vicsek model.

The remainder of this paper is organized as follows: Section \ref{Section:model} introduces the Vicsek model; Section \ref{Section:results} presents our main findings; Section \ref{Section:simulations} provides some simulation results to validate the theoretical results and Section \ref{Section:conclusions} concludes with some final remarks.
%By providing a mathematical proof of synchronization under the influence of noise, we enrich the theoretical understanding of the Vicsek model's behavior and dynamics while offering practical insights into achieving desired collective behaviors through manipulation of different noise levels.

%Our findings have far-reaching implications. In an increasingly complex world characterized by intricate interactions, the ability to predict and control synchronization becomes crucial. Our work emphasizes that noise, often perceived as an obstacle, can indeed facilitate consensus behavior within a system, thereby opening new avenues for research and application \cite{Bialek2012}. Furthermore, this paper bridges the gap between theoretical physics and practical applications. By establishing the mathematical principles governing synchronization, we lay the groundwork for design strategies in multi-agent systems. For instance, in robotic swarms, understanding noise thresholds can enhance coordination and performance, leading to more efficient outcomes . Similarly, diagnosing the impacts of noise on collective behavior sheds light on how biological entities adapt to environmental stressors.

\section{Model}\label{Section:model}
Let the position space $M=[0,B]^2\subset\mathbb{R}^2$ of the model is a square with length of side $B>0$, and $\partial M=\{\partial M_1,\ldots,\partial M_4)\}$ are four sides of $M$.
For any $y\in\mathbb{R}$, denote
\begin{equation}\label{Equa:boundednotation}
  y_{[0,B]}=\left\{
              \begin{array}{ll}
                0, & \hbox{$y<0$} \\
                y, & \hbox{$0\leq y\leq B$,} \\
                B, & \hbox{$y>B$}
              \end{array}
            \right.
\end{equation}
and for $y=(y^1,\ldots,y^d)\in \mathbb{R}^d(d>1)$, define $y_{[0,B]^d}=(y^1_{[0,B]},\ldots,y^d_{[0,B]})$ which means each $y^i$ satisfying (\ref{Equa:boundednotation}).

Let $\mathcal{V}=\{1,\ldots,n\}$ be the set of agents.
The Vicsek model takes the following rules:
\begin{equation}\label{Model:Vicsekmodel}
\begin{cases}
  \theta_i(t+1)=&\frac{1}{|\mathcal{N}_i(t)|}\sum_{j\in\mathcal{N}_i(t)}\theta_j(t)+\xi_i(t+1)\\
  x_i(t+1)=&(x_i(t)+v(\cos\theta_i(t+1),\sin\theta_i(t+1))^T)_{[0,B]^2}
\end{cases},
\end{equation}
%\begin{equation}\label{Model:Vicsekmodelorig}
%\begin{cases}
%  \theta_i(t+1)=&\mbox{atan2}\Big(\sum_{j\in\mathcal{N}_i(t)}\sin\theta_j(t),\sum_{j\in\mathcal{N}_i(t)}\cos\theta_j(t)\Big)+\xi_i(t+1)\\
%  x_i(t+1)=&x_i(t)+v(\cos\theta_i(t+1),\sin\theta_i(t+1))^T_{[0,B]^2}
%\end{cases},
%\end{equation}
where $\theta_i(t)\in[-\pi,\pi)$ is the heading angles of agent $i$ at time $t$ which determines the direction of its movement, $x_i(t)\in[0,B]^2$ denotes the position of agent $i$ at time $t$ within the bounded region $M$, $v>0$ is the constant moving speed and $\xi_i(t)$ represents the random noise affecting the heading angle, typically modeled as a random variable with a specified distribution. $|\cdot|$ is the cardinality of a set. In addition,
\begin{equation}\label{Model:neighbor}
  \mathcal{N}_i(t)=\{j\in\mathcal{V}:\|x_i(t)-x_j(t)\|\leq r\}
\end{equation}
is the set of neighbors of agent $i$, and $r\in(0,B)$ is the neighbor radius of agent $i$. Here, $\|\cdot\|$ is the Euclidean norm.

In order to study the noise-driven synchronization of the model (\ref{Model:Vicsekmodel}) - (\ref{Model:neighbor}), we introduce the definition of $\tau$-synchronization \emph{almost surely} and \emph{in mean}.
\begin{defn}\label{Def:meansyn}
Denote
$  d_{\theta}(t)=\max\limits_{i, j\in \mathcal{V}}|\theta_i(t)-\theta_j(t)|, t\geq 0$. For any $\tau>0,$
\begin{enumerate}
  \item if $\mathbb{P}\Big\{\limsup\limits_{t\rightarrow\infty}d_\theta(t)\leq \tau\Big\}=1$, we say the system (\ref{Model:Vicsekmodel}) - (\ref{Model:neighbor}) achieves $\tau$-synchronization almost surely (a.s.);
  \item if $\limsup\limits_{t\rightarrow\infty}\mathbb{E}\,d_\theta(t)\leq \tau$, we say the system (\ref{Model:Vicsekmodel}) - (\ref{Model:neighbor}) achieves $\tau$-synchronization in mean (i.m.).\\
\end{enumerate}
\end{defn}

\section{Main Results}\label{Section:results}

From the theory of \cite{Chen2017}, the Vicsek model (\ref{Model:Vicsekmodel}) - (\ref{Model:neighbor}) cannot achieve $\tau$-synchronization a.s.
\begin{thm}\label{Thm:noasyncrhnoise}
Given any configuration $x_i(0)\in [0,B]^2$, $\theta_i(0)\in[-\pi,\pi)$, $i\in\mathcal{V}$ and $v>0, r\in(0,B)$ of the systems (\ref{Model:Vicsekmodel}) - (\ref{Model:neighbor}), suppose the noises $\{\xi_i(t)\}_{i\in\mathcal{V},t\geq 1}$ are zero-mean random variables with independent and identical distribution (i.i.d.), and $\mathbb{E}\,\xi^2_1(1)>0, |\xi_1(1)|\leq \delta$ a.s. for $\delta>0$.
Then for any $0<\tau<2\pi$ and $\delta>0$, the Vicsek model (\ref{Model:Vicsekmodel}) - (\ref{Model:neighbor}) cannot achieve $\tau$-synchronization a.s., and actually we have
\begin{equation}\label{Equa:Pdthetasyn}
  \mathbb{P}\Big\{\limsup\limits_{t\rightarrow\infty}d_\theta(t)> \tau\Big\}=1.
\end{equation}
\end{thm}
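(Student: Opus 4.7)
The plan is to prove the result by a conditional Borel--Cantelli argument (Lévy's extension): I exhibit a sequence of events $E_k$, measurable with respect to $\mathcal{F}_{t_{k+1}}$ for $t_k = kK$ and a fixed window length $K \in \mathbb{N}$, such that (i) $E_k$ implies $d_\theta(s) > \tau$ for some $s \in (t_k, t_{k+1}]$, and (ii) $\mathbb{P}(E_k \mid \mathcal{F}_{t_k}) \geq p$ a.s.\ for some deterministic $p > 0$. Then $\sum_k \mathbb{P}(E_k \mid \mathcal{F}_{t_k}) = \infty$ a.s., so $\mathbb{P}(E_k \text{ i.o.}) = 1$, and by ensuring the construction gives $d_\theta$ exceeding $\tau$ by a fixed positive margin, this forces $\mathbb{P}(\limsup_t d_\theta(t) > \tau) = 1$.

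The uniform lower bound in (ii) will follow from exhibiting a concrete ``driving'' realization of the noise on each window. Since $\mathbb{E} \xi_1^2(1) > 0$ and $\mathbb{E}\xi_1(1) = 0$, the noise takes values of both signs with positive probability, so I can find constants $a \in (0, \delta]$ and $q > 0$ with $\mathbb{P}(\xi_1(1) \in [a, \delta]) \geq q$ and $\mathbb{P}(\xi_1(1) \in [-\delta, -a]) \geq q$. I split $(t_k, t_{k+1}]$ into two phases. In \textbf{Phase 1} ($K_0$ steps), the prescribed noise drives agents $1$ and $2$ to opposite corners of $[0,B]^2$ and the remaining $n-2$ agents into a third corner, so that at the end of this phase agents $1$ and $2$ are each isolated, i.e.\ $\mathcal{N}_i = \{i\}$ for $i \in \{1,2\}$. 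In \textbf{Phase 2} ($K_1 = \lceil (\tau+2a)/(2a)\rceil$ steps), the prescribed noise forces $\xi_1(s) \in [a, \delta]$ and $\xi_2(s) \in [-\delta, -a]$ throughout. Because isolated agents satisfy $\bar\theta_{\mathcal{N}_i(s)}(s) = \theta_i(s)$, the angle update in Phase 2 reduces to $\theta_i(s+1) = \theta_i(s) + \xi_i(s+1)$, so $\theta_1 - \theta_2$ grows by at least $2a$ per step and exceeds $\tau + 2a$ by the window's end; simultaneously, positions wobble by at most $vK_1$, which can be kept much smaller than the box diagonal, preserving isolation. By the i.i.d.\ structure of the noise, the probability of the prescribed pattern across all $n$ agents for $K = K_0 + K_1$ steps is at least $q^{nK}$, giving the required uniform $p$.

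The main obstacle is the rigorous implementation of Phase 1 with a uniform bound $K_0$ independent of the initial configuration. The heading update $\theta_i(s+1) = \bar\theta_{\mathcal{N}_i(s)}(s) + \xi_i(s+1)$ dampens single-step noise by the cluster's average, so separating two agents from a dense cluster requires sustained coordinated noise over several consecutive steps. I would address this with an explicit geometric routing scheme: at each step the noise is prescribed so that each agent's heading drifts toward its target corner, progressively relocating the agents while respecting the clipping in \eqref{Equa:boundednotation}. Since positions live in the compact box $[0,B]^2$ and each prescribed step contributes a heading shift of at least order $a$, a finite number of steps $K_0 = K_0(n, v, r, B, a)$ always suffices, uniformly over possible states at time $t_k$. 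This geometric construction is the technical heart of the argument; once it is in place, Phase 2 and the Borel--Cantelli conclusion are routine.
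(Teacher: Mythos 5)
Your overall strategy is the right one, and in fact it is the same one the paper relies on: the paper does not prove Theorem \ref{Thm:noasyncrhnoise} itself but imports it from \cite{Chen2017}, whose argument is exactly a ``control series'' of noise realizations that has uniformly positive probability over uniformly bounded windows and forces $d_\theta$ above $\tau$, combined with a conditional Borel--Cantelli (second Borel--Cantelli / L\'evy extension) step. However, your proposal leaves the entire technical content of that argument unproven. Phase 1 --- routing two agents into isolation against the averaging dynamics --- is the whole theorem: while agents $1$ and $2$ share a neighborhood, the update $\theta_i(t+1)=\bar\theta_{\mathcal{N}_i(t)}(t)+\xi_i(t+1)$ caps their single-step heading difference at $2\delta$, so spatial separation accrues only at rate $O(v\delta)$ per step, must be threaded past the other $n-2$ agents (through whom the neighbor graph can reconnect them), and must interact correctly with the boundary clipping in (\ref{Equa:boundednotation}). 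You flag this as ``the technical heart'' and defer it; as written the proposal is a plan, not a proof.

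There is also a concrete error in Phase 2. You claim positions ``wobble by at most $vK_1$, which can be kept much smaller than the box diagonal.'' But agents advance a full step of length $v$ at every time step regardless of how small the noise is, and $K_1=\lceil(\tau+2a)/(2a)\rceil$ with $a\le\delta$ arbitrarily small, so $vK_1\sim v\tau/(2a)$ is in general far larger than $B$. Isolation during Phase 2 cannot be preserved by ``small wobble''; it has to be preserved by the clipping mechanism (parking each isolated agent in a corner with an outward-pointing heading so that $x_i(t+1)=x_i(t)$), and you must then verify that the prescribed monotone drift of $\theta_1-\theta_2$ by $2a$ per step for $K_1$ steps does not rotate the headings so far that the agents slide along the boundary back into each other's (or the herd's) $r$-neighborhoods before $d_\theta$ exceeds $\tau$. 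Finally, the uniformity of $K_0$ and of the conditional probability $p$ over the arbitrary (random) configuration at time $t_k$ --- which is what makes $\sum_k\mathbb{P}(E_k\mid\mathcal{F}_{t_k})=\infty$ hold almost surely --- is asserted from compactness of the position space but not established; the heading state and the neighbor-graph topology at $t_k$ also vary, and the routing scheme must handle all of them within the same window length. Until these three points are filled in, the proof is incomplete.
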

Theorem \ref{Thm:noasyncrhnoise} shows that the Vicsek model cannot achieve synchronization for any initial conditions and noise amplitude, which seems contradict people's practical observations via simulation or experiment studies of Vicsek model. Actually, this paradox arises only because of the mathematical definition of synchronicity in the theorem. If we adjust the mathematical definition of synchronicity, we can have the following result:
\begin{thm}\label{Thm:meansyncrhnoise}
Given any configuration $x_i(0)\in [0,B]^2$, $\theta_i(0)\in[-\pi,\pi)$, $i\in\mathcal{V}$ and $v>0, r\in(0,B)$ of the systems (\ref{Model:Vicsekmodel}) - (\ref{Model:neighbor}), suppose the noises $\{\xi_i(t)\}_{i\in\mathcal{V},t\geq 1}$ are i.i.d. zero-mean random variables with $\mathbb{E}\,\xi^2_1(1)>0, |\xi_1(1)|\leq \delta$ a.s. for $\delta>0$.
Then for any $\tau>0$, there exists $\bar{\delta}=\bar{\delta}(n,r,v,B,\tau)>0$ such that for any $\delta\in(0,\bar{\delta}]$, the Vicsek model (\ref{Model:Vicsekmodel}) - (\ref{Model:neighbor}) can achieve $\tau$-synchronization i.m., i.e.,
\begin{equation}\label{Equa:Edthetasyn}
  \limsup\limits_{t\rightarrow\infty}\mathbb{E}\,d_\theta(t)<\tau.
\end{equation}
\end{thm}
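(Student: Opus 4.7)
The strategy is to identify a ``gathered'' state in which the entire population lies within pairwise distance $r$, so that the interaction graph is complete, and to show that, for small enough $\delta$, the system spends almost all of its time there. Since $d_\theta(t)\le 2\pi$ always, once the complement of the gathered set has asymptotic occupation probability close to $0$, the desired bound on $\limsup_t \mathbb{E}\, d_\theta(t)$ will follow from the decomposition
\[
\mathbb{E}\, d_\theta(t) \le 2\delta\cdot\mathbb{P}(\text{gathered at }t)+2\pi\cdot\mathbb{P}(\text{not gathered at }t).
\]
The factor $2\delta$ in the first term uses a single-step contraction inside the gathered state: when $\mathcal{N}_i(t)=\mathcal{V}$ for every $i$, the update rule gives $\theta_i(t+1)=\bar\theta(t)+\xi_i(t+1)$, so $d_\theta(t+1)\le 2\delta$ immediately.

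The plan has three steps. First, I would analyse the persistence of the gathered state. Once headings are aligned to within $O(\delta)$, the pairwise velocity vectors differ by at most $O(v\delta)$, so the pairwise position displacements follow a bounded, mean-zero random walk of step size $O(v\delta)$. A martingale argument would then show that, with high probability, no pair exits the ball of radius $r$ for a time that grows to infinity as $\delta\to 0$. The boundary projection $(\cdot)_{[0,B]^2}$ is decisive here: as soon as the aligned cluster drifts into a wall or corner, the projection annihilates the component of motion pushing into the boundary, so the cluster becomes pinned and the gathered state is sustained much longer than it would be in the interior of $M$.

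Second, I would prove a uniform recurrence estimate: from every initial configuration there exist $T=T(n,r,v,B)$ and $p_0=p_0(n,r,v,B,\delta)>0$ such that the system enters the gathered state within $T$ steps with probability at least $p_0$. The argument combines the noise-free averaging dynamics, which aligns headings within each connected component of the interaction graph in $O(n)$ steps, with coherent motion in the bounded domain $M$, which forces each component to reach the boundary and concentrate on an edge or corner. On the rare event that two components happen to meet at the same edge, the assumption $\mathbb{E}\,\xi_1^2(1)>0$ furnishes the small extra noise kicks needed to bring them within distance $r$; since this is a finite sequence of events of strictly positive probability, $p_0>0$ holds uniformly in the initial state.

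Third, I would patch the two steps together by a renewal/strong-Markov argument on the successive excursions in and out of the gathered set. By Step~2 the mean gap between excursions is bounded by $T/p_0$, while by Step~1 the mean length of a gathered excursion tends to $+\infty$ as $\delta\downarrow 0$. Hence the stationary probability of the non-gathered set tends to $0$ with $\delta$, and choosing $\bar\delta$ small enough that both summands on the right-hand side of the displayed inequality lie below $\tau/2$ for all large $t$ completes the argument. The main obstacle is Step~2: showing a uniformly positive hitting probability for the gathered state out of an \emph{arbitrary} initial configuration, with noise of amplitude only $\delta$, is delicate, since the bounded-domain geometry and the saturation rule must be orchestrated to funnel disjoint components onto a common corner within a time horizon independent of the initial state, while all available perturbations are of size at most $\delta$.
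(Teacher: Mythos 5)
Your overall architecture---reach a ``gathered'' state where the interaction graph is complete, show it persists, and split $\mathbb{E}\,d_\theta(t)$ according to whether the system is gathered---is the same as the paper's (Lemma \ref{Lem:enterbarD} for hitting, Lemmas \ref{Lem:probdXtrho}--\ref{Lem:probdvt} for persistence, then the expectation decomposition), and your one-step contraction $d_\theta(t+1)\le 2\delta$ on the complete graph is exactly the paper's (\ref{Equa:dthetal2delta}). The gap is in how you make persistence quantitative. Your Step 1 martingale bound gives a gathered excursion of length only $O((r/(v\delta))^2)$, since the relative positions perform a mean-zero walk of step size $O(v\delta)$; meanwhile the return time in your Step 3 is of order $T/p_0$, where $p_0$ is the probability of a controlled gathering sequence. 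Such a sequence must realize $\Omega(1/\delta)$ correctly-signed noise kicks (aligning headings and steering components costs time of order $1/\alpha$ with $\alpha\le\delta$, each step succeeding with probability bounded away from $1$), so $p_0$ decays at least exponentially in $1/\delta$ and $T/p_0$ dwarfs the $O(\delta^{-2})$ excursion length. The ratio in your renewal argument therefore goes the wrong way as $\delta\downarrow 0$; and using the worst-case, arbitrary-initial-state hitting probability for the post-exit returns makes this worse, since once the cluster exits the gathered set the interaction graph can disconnect and the components separate, putting you back in the hard case.

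The paper closes this by never letting the system leave the gathered set except on an event of controlled probability: Lemma \ref{Lem:probdXtrho} proves $\sup_{t}\mathbb{P}\{d_x(t)>a\}<\rho$ directly, combining (i) the deterministic bound that $d_x$ grows by at most $O(v\delta)$ per step, so over a window of $KL$ steps it cannot increase by more than $a/3$ once $\delta\le\bar\delta_0/K$, with (ii) a boundary-induced re-contraction event of probability $p>0$ in every window of length $L$ (the first agent to hit $\partial M$ stalls while the others catch up, re-compressing the cluster to spread $\le a/3$), so that $d_x(t)>a$ forces $K$ consecutive failures and $\mathbb{P}\{d_x(t)>a\}\le(1-p)^K\le\rho$ uniformly in $t$. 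You correctly identify the boundary pinning as the decisive mechanism, but you use it only qualitatively; to repair your argument you would need to upgrade Step 1 to a uniform-in-time bound of this type (or show that the pinned excursion length beats $T/p_0$), rather than relying on the ratio of mean excursion lengths.
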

\begin{rem}\label{Remark:DiffThm12}
The difference between these two theorems is quite mathematical. Practically, when the Vicsek model achieves synchronization at some moment $T>0$, Theorem \ref{Thm:noasyncrhnoise} means that, one can always observe the synchronized Vicsek model divided at some moment $t>T$ in an experiment; while Theorem \ref{Thm:meansyncrhnoise} reveals that if the experiment is repeated many times, the average angle difference $d_\theta(t)$ at any time $t>T$ is still tiny.
\end{rem}

The proof of Theorem \ref{Thm:meansyncrhnoise} is based on the following lemmas.
\begin{lem}\label{Lem:enterbarD}
Given any configurations of the systems (\ref{Model:Vicsekmodel}) - (\ref{Model:neighbor}) in Theorem \ref{Thm:meansyncrhnoise} and $0<\epsilon<r$, let $T_\epsilon=\inf\{t:\max_{i,j\in\mathcal{V}}\|x_i(t)-x_j(t)\|<\epsilon\}$. Then there exists $\bar{\delta}>0$ such that $\mathbb{P}\{T_\epsilon<\infty\}=1$ for all $\delta\in(0,\bar{\delta}]$.
\end{lem}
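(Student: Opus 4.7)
The plan is to view the Vicsek dynamics $\{(x(t),\theta(t))\}_{t\geq 0}$ as a time-homogeneous Markov chain on the compact state space $M^n\times[-\pi,\pi)^n$ and to establish a \emph{uniform positive hitting probability} into the target set $\mathcal{D}_\epsilon=\{(x,\theta):\max_{i,j}\|x_i-x_j\|<\epsilon\}$ within a bounded number of steps. Concretely, I would show that there exist $K=K(n,r,v,B,\delta,\epsilon)\in\mathbb{N}$ and $p>0$ such that from every state the probability of hitting $\mathcal{D}_\epsilon$ by time $K$ is at least $p$. Applying this repeatedly via the strong Markov property along the epochs $0,K,2K,\ldots$ yields $\mathbb{P}\{T_\epsilon>kK\}\leq(1-p)^k\to 0$, and hence $\mathbb{P}\{T_\epsilon<\infty\}=1$.

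To prove the uniform hitting bound, I would construct, for each initial state, a deterministic noise sequence $\xi_i(t)\in[-\delta,\delta]$ that drives all agents into an $\epsilon$-cluster within $K$ steps. Fixing a corner of $M$, say $(0,0)$, the maneuver proceeds in two phases: an \emph{angular steering} phase of order $O(1/\delta)$ steps, in which the noises are chosen to incrementally rotate every heading $\theta_i(t)$ toward the direction $\theta^\star=5\pi/4$ pointing at the corner, compensating on each step for the averaging term $|\mathcal{N}_i(t)|^{-1}\sum_{j\in\mathcal{N}_i(t)}\theta_j(t)$; and a subsequent \emph{gathering} phase, during which the nearly parallel agents are swept into the corner and held there by the projection $(\cdot)_{[0,B]^2}$, which acts as a trap against the two boundary sides meeting at $(0,0)$, with a short final adjustment compressing the cluster diameter below $\epsilon$. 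The hypotheses $\mathbb{E}\xi_1^2>0$, $\mathbb{E}\xi_1=0$, and $|\xi_1|\leq\delta$ yield some $\delta_0\in(0,\delta)$ with $\mathbb{P}(\xi_1>\delta_0)>0$ and $\mathbb{P}(\xi_1<-\delta_0)>0$, so any prescribed sign pattern for the $nK$ independent noises has strictly positive probability; combined with continuity of the transition map and compactness of the state space, this delivers a lower bound $p>0$ uniform over initial states.

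The main obstacle I anticipate is the design of the angular steering phase. The averaging operator $|\mathcal{N}_i(t)|^{-1}\sum_{j\in\mathcal{N}_i(t)}\theta_j(t)$ couples the angle updates of any connected cluster of agents, so individual headings cannot be steered independently of their neighbors. The maneuver must therefore either first use the position dynamics to temporarily disconnect neighbors before steering them separately, or rotate connected clusters as rigid wholes and only afterwards fine-tune their relative headings; each of these sub-cases requires its own verification that the proposed noise sequence respects the recursive definition of $\mathcal{N}_i(t)$ through the updated positions. The smallness of $\bar\delta$ enters through two competing constraints: $\bar\delta$ must be small enough that, once the agents are gathered, a single noise step cannot disperse them beyond the $\epsilon$ threshold, yet the prescribed maneuver must remain realizable with noises of amplitude at most $\bar\delta$ within a finite $K=K(\bar\delta)$. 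Balancing these requirements while carefully handling the coupling through $\mathcal{N}_i(t)$ is the technical core of the proof.
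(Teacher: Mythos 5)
Your proposal follows essentially the same route as the paper: both reduce the lemma to exhibiting a noise ``control series'' of uniformly bounded length that gathers all agents into an $\epsilon$-cluster with uniformly positive probability (using the nondegeneracy of the zero-mean bounded noise to realize prescribed sign/interval patterns), and then iterate via the strong Markov property to conclude $\mathbb{P}\{T_\epsilon<\infty\}=1$. The paper's construction likewise steers headings within each connected component, drives agents to the boundary where the projection $(\cdot)_{[0,B]^2}$ acts as a trap, and merges components by induction, so your corner-trap variant and your two proposed resolutions of the neighbor-graph coupling are minor variations on the same argument.
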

\begin{proof}
Since $\{\xi_i(t), i\in\mathcal{V}, t>0\}$ are nondegenerate random variables of zero-mean, it is easy to know that for any $\delta\in(0,2\pi]$ there exist $0<p<1$, $0<\alpha<\delta$ such that
\begin{equation}\label{Equa:probxiap}
  \mathbb{P}\{\xi_i(t)>\alpha\}>p,~~~\mathbb{P}\{\xi_i(t)<-\alpha\}>p.
\end{equation}

Based on the theory of \cite{Chen2017}, it is known that if we can find a ``control series'' based on noise which drives the system to a targeted position with a uniformly positive probability in a uniformly finite time, the system can achieve the targeted position in finite time almost surely. Now for any $x(0)\in [0,B]^{2\times n}$, $\theta(t)\in[-\pi,\pi)^n$, we can design the ``control series'' of noise as follows.

At any time $t\geq 0$, the agents can be divided into $w(1\leq w\leq n)$ connected components. Take anyone connected component at $t$, say $C_1$ for simplicity, rename the agents according to the order of angle $\theta_1(t)\leq \theta_2(t)\leq\ldots\leq\theta_w(t)$($w$ is number of the agents in $C_1$) and denote $\theta_{mid}(t)=\frac{\theta_1(t)+\theta_w(t)}{2}$, then we can design the ``control series'' as follows:
\begin{equation}\label{Equa:noisecontrolseriesangle}
\xi_i(t)\in  \begin{cases}
    & (\alpha,\delta), \qquad\,\,\,\text{if}\quad\theta_i(t)<\theta_{mid}(t),  \\
      & (-\delta,-\alpha),\quad \text{if}\quad\theta_i(t)\geq\theta_{mid}(t).
  \end{cases}
\end{equation}
Under (\ref{Equa:noisecontrolseriesangle}), by (\ref{Model:Vicsekmodel}) and (\ref{Equa:probxiap}) we can know that, for some sufficiently small $\bar{\delta}_1 >0$ and any given $\delta\in(0,\bar{\delta}_1]$, there exists $0<\alpha\leq \delta$ so that $|\theta_w(1)-\theta_1(1)|$ decreases at least by $2\alpha>0$ after each step, then there exists $0<L_0<\max\{\frac{2\pi-d_\theta(0)}{2\alpha},\frac{2B}{v}\}$ such that $|\theta_w(L_0)-\theta_1(L_0)|\leq 2\delta$ and $x_i(L_0)\in \partial M, i=1,\ldots,w$, i,e., all agents of the components are on the boundary. For $t\geq L_0$, rename the agents according to the order of their position on $\partial M$ as $x_1(t)\rightarrow\ldots\rightarrow x_w(t)$ in the anticlockwise direction ``$\rightarrow$'' along the boundary, and the ``control series'' based on noise can be taken as
\begin{equation}\label{Equa:noisecontrolseriesangleposi}
 \xi_i(t)\in \begin{cases}
    & (\alpha,\delta), \qquad\,\,\,\text{if}\quad i<\frac{w}{2},  \\
      &(-\delta,-\alpha),\quad \text{if}\quad i\geq\frac{w}{2}.
  \end{cases}
\end{equation}
Under (\ref{Equa:noisecontrolseriesangleposi}), by (\ref{Model:Vicsekmodel}) and (\ref{Equa:probxiap}) we can know that, for some sufficiently small $0<\bar{\delta}_2$ and any given $\delta\in(0,\bar{\delta}_2]$, there exists $0<\alpha\leq\delta$ and $0<L_1\leq \frac{\pi}{4\alpha}$ such that the angle between the move direction of agents and the boundary is less than $\frac{\pi}{4}$ at $L_0+L_1$, then again by (\ref{Model:Vicsekmodel}),
there exists $0<L_2\leq \frac{2B}{v}$ such that for all $\delta\in(0,\frac{\pi}{4L_2}]$,
\begin{equation}\label{Equa:posiepsiL0L1}
  \max_{i,j\in C_1}\|x_i(L_0+L_1+L_2)-x_j(L_0+L_1+L_2)\|<\epsilon.
\end{equation}
Carry on the above design for each connected components, then there exists $L_w>0$ so that at $t=L_w$, the group of agents can be divided into $w(1\leq w\leq n)$ subgroups, and  $\max_{i,j\in\mathcal{V}_k}\|x_i(L_w)-x_j(L_w)\|<\epsilon, k=1,\ldots,w$.

Now we will prove for this special state configuration, we can design a ``control series'' of noise. First, suppose $w=2$ and without loss of generality $L_w=0$ for simplicity, then we can design a ``control series'' of noise as follows. Since $\max_{i,j\in\mathcal{V}_k}\|x_i(0)-x_j(0)\|<\epsilon, k=1,2$, by (\ref{Model:Vicsekmodel}), $\max_{i,j\in\mathcal{V}_k}|\theta_i(1)-\theta_j(1)|\leq 2\delta$. Let $\tilde{\theta}_{mid}(t)=\frac{\tilde{\theta}_1(t)+\tilde{\theta}_2(t)}{2}$ where $\tilde{\theta}_i(t)=\frac{1}{|\mathcal{V}_i(t)|}\sum_{k\in\mathcal{V}_i}\theta_k(t)$, $i=1,2$,
\begin{equation}\label{Equa:contrtwogroup}
\xi_i(t)\in
\begin{cases}
  &(\alpha,\delta), \qquad\,\,\,\text{if}\quad\theta_i(t)<\tilde{\theta}_{mid}(t),  \\
      & (-\delta,-\alpha),\quad \text{if}\quad\theta_i(t)\geq\tilde{\theta}_{mid}(t).
\end{cases}
\end{equation}
Under (\ref{Equa:contrtwogroup}), by selecting sufficiently small $\epsilon$, $\delta$ and using a similar argument of (\ref{Equa:posiepsiL0L1}), we can know that there exists $L>0$ such that $\max_{i,j\in\mathcal{V}}\|x_i(L)-x_j(L)\|<\epsilon$. This prove the case of $w=2$.
Suppose the conclusion holds for $w=k<n$. Now we consider the case of $w=k+1$. The conclusion is immediate since we can select any two subgroups and use (\ref{Equa:noisecontrolseriesangle}) and (\ref{Equa:contrtwogroup}) repeatedly untitle the number of subgroups reduce to $w\leq k$ within a finite period. In conclusion, we can finally find a ``control series'' based on noise within a finite definite time $L>0$ such that $\max_{i,j\in\mathcal{V}}\|x_i(L)-x_j(L)\|<\epsilon$.
This completes the proof.
\end{proof}
Lemma \ref{Lem:enterbarD} shows that for any initial configuration, the system can always achieve a state where all agents are arbitrarily close to each other in a finite time.
In the following, denote $d_x(t)=\max_{i,j\in\mathcal{V}}\|x_i(t)-x_j(t)\|,\,t\geq 0$.

\begin{lem}\label{Lem:probdXtrho}
Given any $\theta_0\in[-\pi,\pi)^n$, $0<a\leq r$, if $d_x(0)\leq \frac{a}{3}$, then for any $\rho>0$, there exists $0<\bar{\delta}\leq \frac{\tau}{2}$ such that $\mathbb{P}\{d_x(t)>a\}<\rho$ for all $t>0$ whenever $\delta\in(0,\bar{\delta}]$.
\end{lem}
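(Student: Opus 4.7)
The plan is to exploit $d_x(0) \le a/3 < r$ to keep all agents in a single neighborhood, which collapses the angle recursion and in turn bounds the pairwise positions. At $t=0$ every agent lies in every other's neighborhood, so $\mathcal{N}_i(0)=\mathcal{V}$; as long as $d_x(s) \le r$ for $s \le t$, this property propagates, and the heading update reduces to $\theta_i(s+1) = \bar{\theta}(s) + \xi_i(s+1)$ with $\bar{\theta}(s) = n^{-1}\sum_j \theta_j(s)$. In particular $|\theta_i(s+1)-\theta_j(s+1)| = |\xi_i(s+1)-\xi_j(s+1)| \le 2\delta$, and imposing $\bar{\delta} \le \tau/2$ delivers $d_\theta \le \tau$ as a free by-product.

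Next, I would analyse the pairwise position differences $y_{ij}(t) := x_i(t) - x_j(t)$. Ignoring the projection $(\cdot)_{[0,B]^2}$ momentarily, the position update gives
\begin{equation*}
y_{ij}(t) - y_{ij}(0) = v\sum_{s=1}^{t}\bigl(\cos\theta_i(s)-\cos\theta_j(s),\ \sin\theta_i(s)-\sin\theta_j(s)\bigr)^{T}.
\end{equation*}
Since $\theta_i(s) = \bar{\theta}(s-1)+\xi_i(s)$ with $\xi_i(s),\xi_j(s)$ i.i.d.\ and independent of $\bar{\theta}(s-1)$, and both noises share the same law, each summand has zero conditional mean. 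Thus $y_{ij}(t)-y_{ij}(0)$ is a vector-valued martingale with increments bounded deterministically by $2v|\sin((\xi_i-\xi_j)/2)| \le 2v\delta$. Coordinate-wise non-expansiveness of $(\cdot)_{[0,B]^2}$ then implies that the clipped difference is dominated in norm by the unclipped one.

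I would then introduce the stopping time $T := \inf\{s > 0 : d_x(s) > r\}$ so that, on $\{t \le T\}$, the mutual-neighbor hypothesis holds throughout $s \le t$. An Azuma-type inequality on the stopped martingale, a union bound over the $\binom{n}{2}$ pairs, and the triangle inequality from $d_x(0) \le a/3$ then give a bound of the form $\mathbb{P}\{d_x(t \wedge T) > a\} \le \binom{n}{2}\cdot 2\exp(-c\,a^{2}/(v^{2}\delta^{2}t))$. The main obstacle is that this bound degrades as $t \to \infty$, so the lemma demands extra mileage from the projection $(\cdot)_{[0,B]^2}$: whenever several agents pile up against the same side or corner of $M$ the projection strictly contracts their pairwise differences, yielding a recurrence mechanism on the compact state space $[0,B]^{2n}\times [-\pi,\pi)^{n}$ that counteracts the linear-in-$t$ martingale variance growth. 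I would combine the finite-horizon estimate above with this recurrence, iterated via the strong Markov property, to obtain a uniform-in-$t$ bound of order $\delta^{2}$ on $\mathbb{E}\|y_{ij}(t)\|^{2}$, from which $\mathbb{P}\{d_x(t)>a\}<\rho$ follows by Chebyshev once $\bar{\delta}=\bar{\delta}(n,r,v,B,\tau,a,\rho)$ is taken small enough.
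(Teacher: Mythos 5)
Your overall architecture matches the paper's: exploit $d_x\le r$ to collapse the heading update to $\theta_i(s+1)=\bar\theta(s)+\xi_i(s+1)$, control $d_x$ over a finite horizon, and then invoke the boundary projection as a contraction, iterated via the strong Markov property, to defeat the linear-in-$t$ growth of the accumulated noise. You have correctly diagnosed that the finite-horizon estimate alone cannot give a uniform-in-$t$ bound and that the projection $(\cdot)_{[0,B]^2}$ must supply a recurrence. (Incidentally, the Azuma step is more machinery than needed: since $|\xi_i(t)|\le\delta$ a.s., the paper bounds the accumulated transverse displacement deterministically by $O(vt\delta)$ over the horizon $t\le 4B/v$, which already yields $d_x(t)\le a$ almost surely on that horizon once $\delta$ is below a constant times $a/B$.)

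The genuine gap is that the contraction step --- the heart of the lemma --- is asserted rather than proved. Two things must be established quantitatively: (i) that all agents actually reach the boundary within a uniformly bounded time (the paper shows that for small $\delta$ each agent drifts toward $\partial M$ by at least $v/2$ per step, so the hitting time is at most $4B/v$ a.s.); and (ii) that once the first agent is absorbed at the boundary while the others keep advancing, $d_x$ collapses back to $\le a/3$ within a bounded time, either deterministically (when the incidence angle is favourable) or on a noise event of uniformly positive probability $p$ (when it is not). Only with a uniform $p$ and a uniform block length can one conclude $\mathbb{P}\{d_x(t)>a\}\le(1-p)^K\le\rho$, which is the paper's closing estimate. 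Your sketch names the mechanism but gives no reason why the contraction event has probability bounded away from zero uniformly over the configurations reachable at the boundary-hitting time; without that, the strong-Markov iteration has nothing to iterate. A secondary flaw: the claim that the clipped pairwise difference is dominated in norm by the unclipped martingale does not survive iteration. One-step non-expansiveness of $(\cdot)_{[0,B]}$ gives $\|y_{ij}(t+1)\|\le\|y_{ij}(t)+v\Delta_{ij}(t+1)\|$ with $y_{ij}(t)$ the already-clipped difference, and these one-step inequalities do not compose into domination by the unclipped vector sum, because after a clipping event the two differences may point in different directions. The paper sidesteps this by running the unclipped expansion only up to the boundary-hitting time and treating the clipping itself as the contraction thereafter.
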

\begin{proof}
First we prove for any $x(0)\in [0,B]^{n\times2}$ with $d_x(0)\leq \frac{2a}{3}$, there exists $L>0, 0<p<1$ and $\bar{\delta}_0>0$ such that $d_x(t)\leq a$, $0<t\leq L$ a.s. and $\mathbb{P}\{d_x(L)\leq \frac{a}{3}\}>p$ for $0<\delta\leq \bar{\delta}_0$.

Denote $T_a=\inf\{t:d_x(t)>a\}$.  Since $d_x(0)\leq \frac{2a}{3}<\frac{2r}{3}$, by (\ref{Model:Vicsekmodel}) we have for $0\leq t<T_a$,
\begin{equation}\label{Equa:thetataylor}
\begin{split}
  \theta_i(t+1)=&\frac{1}{n}\sum_{j\in\mathcal{V}}\theta_j(t)+\xi_i(t+1)\\
  &\cdots\\
  =&\frac{1}{n}\sum_{j\in\mathcal{V}}\theta_j(0)+\sum_{k=1}^t\frac{1}{n}\sum_{i\in\mathcal{V}}\xi_i(k)+\xi_i(t+1)\\
  =&\tilde\theta(0)+\sum_{k=1}^t\frac{1}{n}\sum_{i\in\mathcal{V}}\xi_i(k)+\xi_i(t+1)
  \end{split}
\end{equation}
where $\tilde\theta(t)=\frac{1}{|\mathcal{N}_i(t)|}\sum_{j\in\mathcal{N}_i(t)}\theta_j(t)=\frac{1}{n}\sum_{j\in\mathcal{V}}\theta_j(t)$,
and then
\begin{equation}\label{Equa:xittaylor}
\begin{split}
  &x_i(t+1)=x_i(t)+v(\cos\theta_i(t+1),\sin\theta_i(t+1))^T\\
  =&x_i(t)+v(\cos\tilde\theta(0),\sin\tilde\theta(0))^T+v(-\sin\tilde\theta(0),\cos\tilde\theta(0))^T\Big(\sum_{k=1}^t\bar{\xi}(k)+\xi_i(t+1)+o(\delta)\Big)\\
  &\cdots\\
  =&x_i(0)+v(t+1)(\cos\tilde\theta(0),\sin\tilde\theta(0))^T\\
  &+v(-\sin\tilde\theta(0),\cos\tilde\theta(0))^T\sum_{s=0}^t\Big(\sum_{k=1}^s\bar{\xi}(k)+\xi_i(s+1)+o(\delta)\Big)
\end{split}
\end{equation}
where $\bar{\xi}(k)=\frac{\sum_{i\in\mathcal{V}}\xi_i(k)}{n}$,  then
\begin{equation}\label{Equa:disx1x2}
\begin{split}
&d_x(t+1)=\max_{i,j\in\mathcal{V}}\|x_i(t+1)-x_j(t+1)\|\\
=&\max_{i,j\in\mathcal{V}}\Big\|x_i(t)-x_j(t)+v(-\sin\tilde\theta(0),\cos\tilde\theta(0))^T(\xi_i(t+1)-\xi_j(t+1)+o(\delta))\Big\|\\
  =&\max_{i,j\in\mathcal{V}}\Big\|x_i(0)-x_j(0)+v(-\sin\tilde\theta(0),\cos\tilde\theta(0))^T\sum_{k=0}^{t}(\xi_i(k+1)-\xi_j(k+1)+o(\delta))\Big\|.
\end{split}
\end{equation}

Denote $T_b=\inf\{t: agents~ reach~ the~ boundary\}$.
By (\ref{Equa:xittaylor}), there exists some $\bar{\delta}_1>0$, when $0<\delta<\bar{\delta}_1$, agent $i$ will move towards the boundary by at least $\frac{v}{2}$ after each step, hence
\begin{equation}\label{Equa:Treachbound}
  T_b\leq \frac{4B}{v},~ a.s.
\end{equation}
Meanwhile by (\ref{Equa:disx1x2}),
\begin{equation}\label{Equa:dxtijinequity}
  \begin{split}
d_x(t+1)=&\max_{i,j\in\mathcal{V}}\Big\|x_i(t)-x_j(t)+v(-\sin\tilde\theta(0),\cos\tilde\theta(0))^T(\xi_i(t+1)-\xi_j(t+1)+o(\delta))\Big\|\\
  \leq&d_x(0)+\max_{i,j\in\mathcal{V}}v\|(-\sin\tilde\theta(0),\cos\tilde\theta(0))^T\sum_{k=0}^{t}(\xi_i(t+1)-\xi_j(t+1)+o(\delta))\|.
\end{split}
\end{equation}
Since $d_x(0)\leq \frac{2a}{3}$, by (\ref{Equa:dxtijinequity}) there exists $0<\bar{\delta}_2<\frac{a}{24B}$ such that when $0<\delta<\bar{\delta}_2$,
\begin{equation}\label{dxtleqa}
  d_x(t)\leq a,~~~a.s.
\end{equation}
for $0\leq t\leq \frac{4B}{v}$.
Thus when $0<\delta\leq \min(\bar{\delta}_1, \bar{\delta}_2)$, by (\ref{Equa:Treachbound}) and (\ref{dxtleqa}) we have
\begin{equation}\label{Equa:dxTgeqr}
\begin{split}
 \mathbb{P}\{d_x(T_b)>a\}=&\sum_{k=1}^\infty\mathbb{P}\{d_x(k)>a, T_b=k\} \\
 \leq&\sum_{k=1}^{\lfloor\frac{4B}{v}\rfloor}\mathbb{P}\{d_x(k)>a, T_b=k\}+\sum_{k=\lfloor\frac{4B}{v}\rfloor+1}^\infty\mathbb{P}\{T_b=k\}\\
 = &0.
 \end{split}
\end{equation}
Hence $T_b\leq T_a$ a.s.
Denote $i_0$ as the agent who achieves the boundary at $T_b$. Let
\begin{equation*}
  \textbf{x}_{i,j}(t)=x_i(t)-x_j(t)
\end{equation*}
be the vector between the positions of agents $i$ and $j$ at $t$, and
\begin{equation*}
  \phi(t)=\max_j\langle\textbf{x}_{(i_0,j)}(t), (\cos\theta_j(t),\sin\theta_j(t))\rangle
\end{equation*}
be the angle between the position vector and move direction at $t$ (See Figure \ref{Fig:phit}).

\begin{figure}%{r}{0.5\textwidth} %r表示图像在右边，0.5\textwidth设置图像宽度
    \centering
 \includegraphics[width=2.5in]{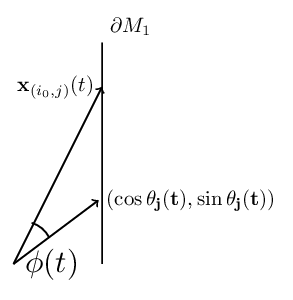}\\
\caption{$\phi(t)$} %添加图像说明
    \label{Fig:phit} %添加引用标签
\end{figure}

From (\ref{Model:Vicsekmodel}), we can see that when $i_0$ first reach the boundary, it will stop at the boundary with its original direction, while the other agents still move towards the boundary by at least $\frac{v}{2}$ with the original direction. Then by (\ref{Equa:xittaylor}), there exists $\phi_0>0$ and $\bar{\delta}_3>0$ such that when $\phi(T_b)\leq \phi_0$ and $\delta<\bar{\delta}_3$, there exists $L_0>0$ such that all agents reach the boundary at $t=T_b+L_0$ and
\begin{equation}\label{Equa:dxtTbL01}
  d_x(T_b+L_0)\leq \frac{a}{3}.
\end{equation}
While if $\phi(T_b)>\phi_0$, we can select the trajectories of $\xi(t)$ which happens with a positive probability under which conclusions like (\ref{Equa:dxtTbL01}) hold, i.e., there exists $L_1>0$, $\bar{\delta}_4>0$ and a nonempty set $A\in [-\delta,\delta]^{n\times L_1}$ such that when $\{\xi(T_b+t), 1\leq t\leq L_1\}\in A$ and $\delta<\bar{\delta}_4$, we have
\begin{equation}\label{Equa:dxtTbL02}
  d_x(T_b+L_1)\leq \frac{a}{3}.
\end{equation}
Denote $S_A(T,L_1)=\{\xi(T+t)\in A, 1\leq t\leq L_1\}$ for any stopping time $T$, by (\ref{Equa:probxiap}) and i.i.d. of $\{\xi_i(t), i\in\mathcal{V}, t\geq 1\}$ and hence strong Morkov property, we know
\begin{equation}\label{Equa:probSAL1}
  \mathbb{P}\{S_A(T,L_1)\}=\mathbb{P}\{S_A(0,L_1)\}>0.
\end{equation}
Take $L'=\frac{4B}{v}+\max(L_0, L_1)$, $\bar{\delta}_0=\min(\bar{\delta}_1,\bar{\delta}_2,\bar{\delta}_3,\bar{\delta}_4)/L$, $p\in(0,\mathbb{P}\{S_A(T_b,L_1)\})$, by (\ref{Equa:Treachbound}), (\ref{Equa:dxtTbL01}), (\ref{Equa:dxtTbL02}) and (\ref{Equa:probSAL1}), we prove the assert at the beginning, i.e., for any $x(0)\in [0,B]^{n\times 2}$ with $d_x(0)\leq \frac{2a}{3}$, there exists $\max(L_0, L_1)\leq L\leq L', 0<p<1$ and $\bar{\delta}_0>0$ such that for all $\delta<\bar{\delta}_0$,
\begin{equation}\label{Equa:dxLgeqp}
  \begin{split}
  &d_x(t)\leq a,~0<t\leq L, ~a.s.,\\
  &\mathbb{P}\Big\{d_x(L)\leq \frac{a}{3}\Big\}\geq\mathbb{P}\{S_A(T_b,L_1)\}>p.
  \end{split}
\end{equation}
The first inequality of (\ref{Equa:dxLgeqp}) shows that given $d_x(0)<\frac{2a}{3}$ and $\delta\in(0,\bar{\delta}_0)$, $d_x(t)$ can only increase no more than $\frac{a}{3}$ during a period of length $L$. Subsequently, given any $0<\rho<1$, let $K=\lceil \log_{1-p}\rho\rceil$, $\bar{\delta}=\frac{\bar{\delta}_0}{K}$, also considering (\ref{Equa:dxtijinequity}), we know that
whenever $d_x(0)\leq \frac{a}{3}$, $0<\delta\leq \bar{\delta}$, it follows
\begin{equation}\label{Equa:dxtKL}
\begin{split}
  &d_x(t)\leq \frac{2a}{3}, \,\,\quad 0<t\leq KL,~ a.s.\\
  &d_x(t)\leq a, \qquad KL<t\leq 2KL,~ a.s.
\end{split}
\end{equation}
and if during the period between $KL$ and $2KL$, $S_A(s,L_1)$ occurs once for some $KL\leq s\leq (2K-1)L$, it follows that $d_x(t)\leq \frac{2a}{3}$ for $s+L\geq t\leq 2KL$ since $d_x(s+L)\leq \frac{a}{3}$ by (\ref{Equa:dxLgeqp}). Hence if $d_x(t)> \frac{2a}{3}$ for some moment $KL<t\leq 2KL$, it must happens that $S_A(kL,L_1)$ does not occur for all $K\leq k\leq 2K-1$. Considering the  independence of $\{\xi_i(t), i\in\mathcal{V}, t\geq 1\}$, by (\ref{Equa:probSAL1}) and the last inequality of (\ref{Equa:dxLgeqp}), we have for $KL<t\leq 2KL$,
\begin{equation}\label{Equa:dxtgeqrho}
\begin{split}
  \mathbb{P}\Big\{d_x(t)>\frac{2a}{3}\Big\}\leq &\prod_{k=K}^{2K-1}(1-\mathbb{P}\{S_A(kL,L_1)\})\\
  <& (1-p)^K\leq\rho.
  \end{split}
\end{equation}
Define stopping times: $T_0^\alpha=0$, $T_{k}^\beta=\inf\{t\geq T_k^\alpha:d_x(t)\geq\frac{2a}{3}-2v\delta\}$, $T_{k+1}^\alpha=\inf\{t\geq T_k^\beta:d_x(t)\leq\frac{a}{3}\}$, $k\geq 0$. Here, $\delta>0$ can certainly be selected to be small enough such that $\frac{2a}{3}-2v\delta>\frac{a}{3}$. By Theorem \ref{Thm:noasyncrhnoise} and Lemma \ref{Lem:enterbarD}, we have
\begin{equation*}
  \mathbb{P}\{T_k^\alpha<\infty\}=\mathbb{P}\{T_k^\beta<\infty\}=1, \quad k\geq 0.
\end{equation*}
By (\ref{Equa:dxtKL}), (\ref{Equa:dxtgeqrho}) and strong Markov property, for all $k\geq 0$,
\begin{equation}\label{Equa:stoprho}
\begin{split}
  &\mathbb{P}\{T_{k+1}^\alpha-T_k^\beta>KL\}<\rho,\\
  &\mathbb{P}\{d_x(t)>a|T_{k+1}^\alpha-T_k^\beta\leq KL, T_k^\beta\leq t<T_{k+1}^\alpha\}=0.
 \end{split}
\end{equation}
By (\ref{Equa:stoprho}),
\begin{equation}\label{Equa:dxtrhostopdif1}
\begin{split}
  &\mathbb{P}\{d_x(t)>a,T_{k+1}^\alpha-T_k^\beta>KL, T_k^\beta\leq t<T_{k+1}^\alpha\}\\
  =&\mathbb{P}\{d_x(t)>a, T_k^\beta\leq t<T_{k+1}^\alpha|T_{k+1}^\alpha-T_k^\beta>KL\}\mathbb{P}\{T_{k+1}^\alpha-T_k^\beta>KL\}\\
  <&\rho\mathbb{P}\{d_x(t)>a, T_k^\beta\leq t<T_{k+1}^\alpha|T_{k+1}^\alpha-T_k^\beta>KL\}\\
  =&\rho\mathbb{P}\{d_x(t)>a| T_k^\beta\leq t<T_{k+1}^\alpha,T_{k+1}^\alpha-T_k^\beta>KL\}\mathbb{P}\{T_k^\beta\leq t<T_{k+1}^\alpha\}\\
  \leq&\rho\mathbb{P}\{T_k^\beta\leq t<T_{k+1}^\alpha\},
\end{split}
\end{equation}
and also
\begin{equation}\label{Equa:dxtrhostopdif2}
  \mathbb{P}\{d_x(t)>a,T_{k+1}^\alpha-T_k^\beta\leq KL, T_k^\beta\leq t<T_{k+1}^\alpha\}=0.
\end{equation}
Hence by (\ref{Equa:dxtrhostopdif1}) and (\ref{Equa:dxtrhostopdif2}),
\begin{equation}\label{Equa:dxtrho1}
\begin{split}
  &\mathbb{P}\{d_x(t)>a, T_k^\beta\leq t<T_{k+1}^\alpha\}\\
  =&\mathbb{P}\{d_x(t)>a,T_{k+1}^\alpha-T_k^\beta\leq KL, T_k^\beta\leq t<T_{k+1}^\alpha\}+\mathbb{P}\{d_x(t)>a,T_{k+1}^\alpha-T_k^\beta> KL, T_k^\beta\leq t<T_{k+1}^\alpha\}\\
  <&\rho\mathbb{P}\{T_k^\beta\leq t<T_{k+1}^\alpha\}.
\end{split}
\end{equation}
Notice for all $k\geq 0, t\geq 0$,
\begin{equation}\label{Equa:dxtrho2}
\mathbb{P}\{d_x(t)>a, T_k^\alpha\leq t<T_{k}^\beta\}=0.
\end{equation}
Therefore, given any $t>0$, by (\ref{Equa:dxtrho1}) and (\ref{Equa:dxtrho2}), we obtain
\begin{equation}\label{Equa:dvtrho}
\begin{split}
  \mathbb{P}\{d_x(t)>a\}=&\sum_{k\geq 0}\mathbb{P}\{d_x(t)>a, T_k^\alpha\leq t<T_{k}^\beta\}+\mathbb{P}\{d_x(t)>a, T_k^\beta\leq t<T_{k+1}^\alpha\} \\
    <&\rho\sum_{k\geq 0}\mathbb{P}\{T_k^\beta\leq t<T_{k+1}^\alpha\}\\
    \leq&\rho.
\end{split}
\end{equation}
This completes the proof.
\end{proof}
Lemma \ref{Lem:probdXtrho} shows that once the system reaches a state where all agents are neighbors close enough to each other, the probability for the agents to get away far from each other can be arbitrarily small as long as noise amplitude is sufficiently small.

\begin{lem}\label{Lem:probdvt}
For any $n\geq 2, 0<\bar{\theta}\leq \tau, \rho>0$, let $T$ be a stopping time with $d_x(T)\leq \frac{r}{3}$, then for any $t>T$ there exists $0<\bar{\delta}< \frac{\bar{\theta}}{2}$ and an $\sigma(\xi(T+1),\ldots,\xi(T+t))$-measurable event $S_t$ with $ \mathbb{P}\{S_t\}<\rho$ such that $\{d_\theta(T+t+1)>\bar{\theta}\}\subset S_t$ whenever $\delta\in(0,\bar{\delta}]$.
\end{lem}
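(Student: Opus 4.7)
The plan is to couple Lemma \ref{Lem:probdXtrho} with the trivial one-step observation that, when every pair of agents is within neighbor range, the Vicsek update collapses all angle differences to differences of fresh noise. Choosing $\bar{\delta}$ small enough to make both ingredients small will give the claim, and the main bookkeeping issue will be the strong Markov transfer so that Lemma \ref{Lem:probdXtrho}, stated for a deterministic initial configuration, applies starting from the stopping time $T$.

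First I would establish the one-step reduction. If $d_x(T+t)\leq r$, then $\mathcal{N}_i(T+t)=\mathcal{V}$ for every $i$, so (\ref{Model:Vicsekmodel}) gives
\begin{equation*}
  \theta_i(T+t+1) = \frac{1}{n}\sum_{j\in\mathcal{V}}\theta_j(T+t) + \xi_i(T+t+1),
\end{equation*}
whose first term is independent of $i$. Hence for any $i,j\in\mathcal{V}$, $|\theta_i(T+t+1)-\theta_j(T+t+1)| = |\xi_i(T+t+1)-\xi_j(T+t+1)|\leq 2\delta$. Pick $\bar{\delta}_1 < \bar{\theta}/2$; then whenever $\delta\leq\bar{\delta}_1$ we get the inclusion
\begin{equation*}
  \{d_\theta(T+t+1) > \bar{\theta}\} \subseteq \{d_x(T+t) > r\}.
\end{equation*}

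Next I would apply Lemma \ref{Lem:probdXtrho} with $a=r$ to the process restarted at $T$. Because $T$ is a stopping time and the noises $\{\xi(s)\}_{s>T}$ are i.i.d. and independent of $\mathcal{F}_T$, the strong Markov property lets Lemma \ref{Lem:probdXtrho} apply to the shifted process with the $\mathcal{F}_T$-measurable initial state $x(T)$, which satisfies $d_x(T)\leq r/3$ by hypothesis. This yields some $\bar{\delta}_2>0$ such that $\mathbb{P}\{d_x(T+t)>r\}<\rho$ for every $t\geq 1$ whenever $\delta\in(0,\bar{\delta}_2]$.

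Finally, I would set $\bar{\delta} = \min(\bar{\delta}_1,\bar{\delta}_2) < \bar{\theta}/2$ and $S_t = \{d_x(T+t) > r\}$. Since $x(T+t)$ is a deterministic function of the $\mathcal{F}_T$-measurable state $x(T)$ together with the increments $\xi(T+1),\ldots,\xi(T+t)$, the event $S_t$ is, conditional on $\mathcal{F}_T$, measurable with respect to $\sigma(\xi(T+1),\ldots,\xi(T+t))$; it satisfies $\mathbb{P}(S_t)<\rho$ by the previous paragraph, and the inclusion $\{d_\theta(T+t+1)>\bar{\theta}\}\subset S_t$ follows from the first step. The main conceptual hurdle is the clean strong-Markov invocation at the random time $T$; the remaining content of the lemma is then an immediate combination of two elementary bounds.
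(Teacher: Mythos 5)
Your proposal is correct and follows essentially the same route as the paper's proof: the one-step observation that $d_x(T+t)\leq r$ forces $d_\theta(T+t+1)\leq 2\delta<\bar{\theta}$, combined with Lemma \ref{Lem:probdXtrho} applied with $a=r$ (transferred to the stopping time $T$ via the strong Markov property) to get $\mathbb{P}\{d_x(T+t)>r\}<\rho$, and then taking $S_t=\{d_x(T+t)>r\}$. The only cosmetic difference is that the paper separately notes the almost-sure bound $d_x(t)\leq \frac{2r}{3}$ for $t\leq L_\rho$ before invoking the probabilistic bound for larger $t$, which your direct application of Lemma \ref{Lem:probdXtrho} for all $t>0$ subsumes.
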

\begin{proof}
For the simplicity of notations in the following proof, suppose $T=0$ a.s. without loss of generality due to the fact that $\{\xi_i(t), i\in\mathcal{V}, t\geq 1\}$ are i.i.d. and hence strong Markov property holds here.

Given $\rho>0$, let $\bar{\delta}_1$ be the $\bar{\delta}$ in Lemma \ref{Lem:probdXtrho}, and $a=r, L_\rho=KL$ in the proof there. By (\ref{Model:Vicsekmodel}) and (\ref{Equa:thetataylor}),
we know whenever $d_x(t)\leq r$,
\begin{equation}\label{Equa:dthetal2delta}
  d_\theta(t+1)\leq 2\delta<\bar{\theta}, ~~~a.s.
\end{equation}
Take $\bar{\delta}=\min\{\bar{\delta}_1, \frac{\bar{\theta}}{2L_\rho}\}$. Since $d_x(0)\leq \frac{r}{3}$, by (\ref{Equa:dxtKL}) we have
\begin{equation}\label{Equa:dxtleqrLrho}
  d_x(t)\leq \frac{2r}{3}<r,\,\,\,t\leq L_\rho.
\end{equation}
%$$ for $$, and then by (\ref{Equa:dthetal2delta}),
%\begin{equation}\label{Equa:dthetaltheta}
%  d_\theta(t+1)\leq 2\bar{\delta}\leq \theta,~a.s.,~~~t\leq L_\rho.
%\end{equation}
For $t> L_\rho$, by Lemma \ref{Lem:probdXtrho},
\begin{equation}\label{Equa:dthetatgeq}
\begin{split}
  &\mathbb{P}\{d_x(t)>r\}<\rho,~~~t>L_\rho.
  %\mathbb{P}\Big\{d_\theta(t)>\theta\Big\}\leq \mathbb{P}\Big\{d_\theta(t)>2\bar{\delta}\Big\}\leq
  \end{split}
\end{equation}
Let $S_t=\{d_x(t)>r\}$, by (\ref{Equa:dthetal2delta}) we have
$\{d_\theta(t+1)>\bar{\theta}\}\subset S_t$. And by the derivation process of (\ref{Equa:dxtgeqrho}), $S_t$ is $\sigma(\xi(1),\ldots,\xi(t))$-measurable. Further by  (\ref{Equa:dxtleqrLrho}) and (\ref{Equa:dthetatgeq}), $\mathbb{P}\{S_t\}<\rho$ for all $t>0$. This completes the proof.
\end{proof}

\emph{Proof of Theorem \ref{Thm:meansyncrhnoise}:}
Let $T=\inf\{t: d_x(t)\leq \frac{r}{3}\}$,
then by Lemma \ref{Lem:enterbarD}
\begin{equation}\label{Equa:Tlessinftythm}
  \mathbb{P}\{T<\infty\}=1,
\end{equation}
and
\begin{equation}\label{Equa:Edthetat1}
  \begin{split}
\mathbb{E}\,d_\theta(t+1)=&\mathbb{E}\Big(d_\theta(t+1)I_{\{T\leq t\}}+d_\theta(t+1)I_{\{T>t\}}\Big)\\
=&\mathbb{E}\bigg(\sum_{k=0}^{t}d_\theta(T+k+1)I_{\{T=t-k\}}\bigg)+\mathbb{E}\,d_\theta(t+1)I_{\{T>t\}}\\
=&\sum_{k=0}^{t}\mathbb{E}\,d_\theta(T+k+1)I_{\{T=t-k\}}+\mathbb{E}\,d_\theta(t+1)I_{\{T>t\}}.
\end{split}
\end{equation}
Take $\bar{\theta}=\frac{\tau}{2}, \rho=\frac{\tau}{4\pi}$ in Lemma \ref{Lem:probdvt}, then $\{d_\theta(T+k+1)>\frac{\tau}{2}\}\subset S_k$ for some $\sigma(\xi(T+1),\ldots)$-measurable $S_k$ and $\mathbb{P}\,\{S_k\}<\rho$. Since $d_\theta(t)\leq 2\pi$, a.s. for all $t\geq 0$, by the independence of $\sigma(T)$ and $\sigma(\xi(T+1),\ldots)$, it follows that
%\begin{equation}\label{Equa:expecT}
%\begin{split}
%  \mathbb{E}\,d_\theta(T+k)=&\mathbb{E}\Big(d_\theta(T+k)I_{\{d_\theta(T+k)\leq\frac{\sigma}{2}\}}+d_\theta(T+k)I_{\{d_\theta(T+k)>\frac{\sigma}{2}\}}\Big)\\
%  \leq &\frac{\sigma}{2}+4\pi\mathbb{P}\Big\{d_\theta(T+k)>\frac{\sigma}{2}\Big\}\\
%  < &\frac{\sigma}{2}+4\pi\rho\leq \sigma.
%  \end{split}
%\end{equation}
%for all $k\geq 0$.
%
%Subsequently,
given any $t\geq 0$, we have
\begin{equation}\label{Equa:Expdthetat2}
\begin{split}
&\sum_{k=0}^{t}\mathbb{E}\,d_\theta(T+k+1)I_{\{T=t-k\}}\\
=&\sum_{k=0}^{t}\mathbb{E}\,\bigg(d_\theta(T+k+1)I_{\{d_\theta(T+k+1)\leq\frac{\tau}{2},T=t-k\}}+d_\theta(T+k+1)I_{\{d_\theta(T+k+1)>\frac{\tau}{2},T=t-k\}}\bigg)\\
\leq& \frac{\tau}{2} \sum_{k=0}^{t}\mathbb{P}\,\{T=t-k\}+2\pi\sum_{k=0}^{t}\mathbb{P}\,\Big\{d_\theta(T+k+1)>\frac{\tau}{2},T=t-k\Big\}\\
\leq& \frac{\tau}{2} \sum_{k=0}^{t}\mathbb{P}\,\{T=t-k\}+2\pi\sum_{k=0}^{t}\mathbb{P}\,\{S_k,T=t-k\}\\
=&\frac{\tau}{2}\mathbb{P}\{T\leq t\}+ 2\pi\sum_{k=0}^{t}\mathbb{P}\,\{S_k\}\mathbb{P}\{T=t-k\}\\
<&\frac{\tau}{2}\mathbb{P}\{T\leq t\}+ 2\pi\rho\mathbb{P}\{T\leq t\}\\
=&\tau\mathbb{P}\{T\leq t\},
\end{split}
\end{equation}
and
\begin{equation}\label{Equa:Edthetat3}
  \mathbb{E}\,d_\theta(t+1)I_{\{T>t\}}\leq 2\pi \mathbb{P}\{T>t\}.
\end{equation}
Then by (\ref{Equa:Tlessinftythm})-(\ref{Equa:Edthetat3}), we obtain
\begin{equation}\label{limleqepsi}
\begin{split}
  \limsup\limits_{t\rightarrow\infty}\mathbb{E}\,d_\theta(t)<& \limsup\limits_{t\rightarrow\infty}\Big(\tau \mathbb{P}\{T\leq t\}+2\pi\mathbb{P}\{T>t\}\Big)\\
  = &\tau.
\end{split}
\end{equation}
This ends the proof. $\hfill \Box$

\section{Simulations}\label{Section:simulations}
In this section, we present simulation results regarding the average of $d_\theta(t)$ to validate the main theoretical findings of this study. The model is initialized with the following parameters: $n=5, B=40, r=8$, and $v=2$. The initial angles of the agents are uniformly distributed over the interval $[-\pi/40,\pi/40]$ and $[-\pi/80,\pi/80]$, while all their initial positions are uniformly generated within the range $[0,40]^2$. We conducted 50 independent simulation runs for each noise amplitude $\delta$, and the resulting average maximum angle difference is illustrated in Figure \ref{Fig:dthetamean}. From (\ref{Equa:dthetal2delta}), we can observe that the maximum angle difference is bounded by the noise amplitude when the system reaches synchronization. Specifically, a smaller noise amplitude leads to a smaller angle difference. This relationship is also clearly demonstrated in Figure \ref{Fig:dthetamean}.

%First, we present a fragmentation of noise-free HK model. Take $n= 20, \epsilon=5$ and the initial states randomly generating on $[0,50]$. Figure \ref{noise0fig} shows four clusters form. Then add independent noises which are uniformly distributed on $[-\delta, \delta]$ to the agents. According to Theorem \ref{Suff_thm}, when $\delta\geq 0.5\epsilon$, the system almost surely achieve quasi-synchronization. Let $\delta=0.2\epsilon$, then Figure \ref{noise1fig} clearly displays the quasi-synchronization picture. Next we consider the case when noise amplitude exceeds the critical value. For a better demonstration, we simply show a synchronized system will divide in the presence of larger noise. Let $\delta=0.6\epsilon$, and Figure \ref{noise6fig} shows the separation of the system.
%
\begin{figure}
  \centering
  % Requires \usepackage{graphicx}
  \includegraphics[width=3in]{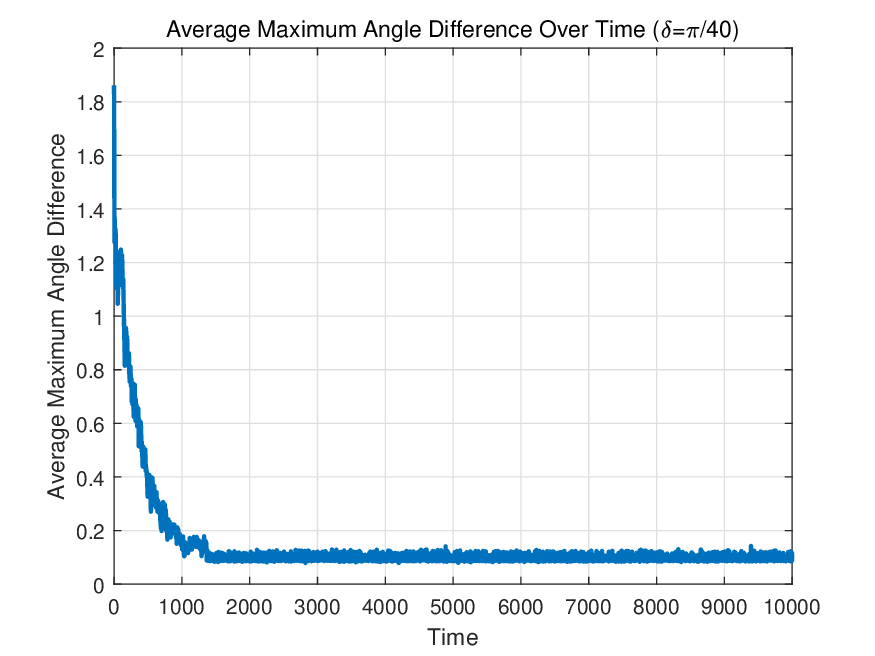}
  \includegraphics[width=3in]{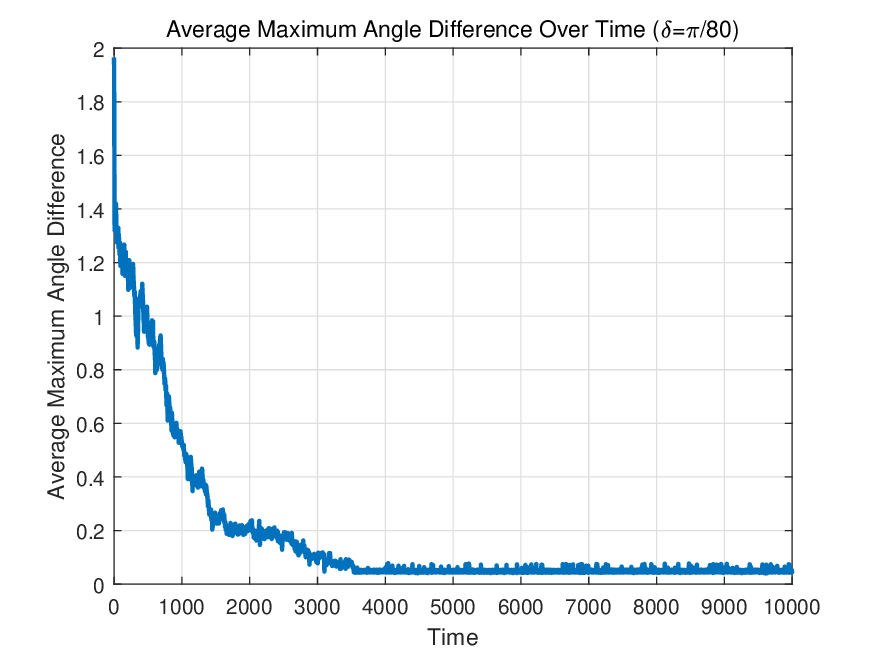}\\
  \caption{The average value of $d_\theta(t)$ across 50 simulation runs}\label{Fig:dthetamean}
\end{figure}
%
%\begin{figure}[htp]
%  \centering
%  % Requires \usepackage{graphicx}
%  \includegraphics[width=4in]{noise1.eps}\\
%  \caption{Opinion evolution of system (\ref{basicHKmodel})-(\ref{neigh}) of 20 agents with noise uniformly distributed on $[-1,1]$. The initial system states are randomly generated on  $[0,50]$, confidence threshold $\epsilon=5$, noise amplitude $\delta=0.2\epsilon$.  }\label{noise1fig}
%\end{figure}
%
%\begin{figure}[htp]
%  \centering
%  % Requires \usepackage{graphicx}
%  \includegraphics[width=4in]{noise6.eps}\\
%  \caption{Opinion evolution of system (\ref{basicHKmodel})-(\ref{neigh}) of 10 agents with noise uniformly distributed on $[-3,3]$. The initial system states are identically taken to be 0.5, confidence threshold $\epsilon=5$, noise amplitude $\delta=0.6\epsilon$.  }\label{noise6fig}
%\end{figure}
%
%\begin{figure}[htp]
%  \centering
%  % Requires \usepackage{graphicx}
%  \includegraphics[width=3in]{threegroup.eps}\\
%  \caption{Opinion evolution of system (\ref{neigh})-(\ref{noisestren}) of 20 agents with noise uniformly distributed on $[0,0.02]\epsilon$. The initial opinion value $x(0)=[0 ~0 ~0 ~0 ~0 ~0 ~0~ 0~ 0 ~0 ~2~2 ~2 ~2 ~2  ~4~ 4~ 4 ~4 ~4]$, confidence threshold $\epsilon=1$, noise amplitude $\delta=0.02$.  }\label{threegroup}
%\end{figure}

\section{Conclusions}\label{Section:conclusions}
In this study, we conducted a rigorous global analysis of the Vicsek model. It has been proven that the Vicsek model can achieve synchronization in mean when the noise amplitude is sufficiently small. Beyond the factor of noise amplitude, the boundedness of the state space is also crucial for the system to attain synchronization. In fact, through the proofs presented in this study, we can uncover the intrinsic mechanisms that lead to synchronization in the model, as well as identify additional types of synchronization. For instance, in scenarios where the noise amplitude is relatively large, if the number of agents in the group is high and their movement speed is low, the group can still coalesce and move forward cohesively. Lastly, the analytical methods and concepts provided in this research hold significant implications for the analysis of synchronization in other self-organizing systems.

\end{document}